\documentclass[12pt]{elsarticle}

\usepackage{xcolor}
\usepackage{times}
\usepackage{natbib}
\usepackage{array,xspace,multirow,hhline,tikz,colortbl,tabularx,booktabs,fixltx2e,amsmath,amssymb,amsfonts,amsthm}
\usepackage{subfig}
\usepackage{float}
\usepackage{natbib}
\usepackage{booktabs}
\usepackage{tikz}
\usepackage{pgflibraryshapes}  %% Ellipses
\usetikzlibrary{positioning}

% for footnotes in tables
\usepackage{setspace}
\usepackage{threeparttable} 

% for text on arrows
\usepackage{mathtools}
\usepackage{amsmath}

\newcommand{\eg}{e.g.,\xspace}

\newtheorem{lemma}{Lemma}%
\newtheorem{theorem}{Theorem}%
\newtheorem{corollary}{Corollary}%
\newtheorem{example}{Example}

%
%
%

%Abbreviation

\DeclareMathOperator{\closure}{\textrm{closure}}
\DeclareMathOperator{\dom}{\textrm{dom}}
\DeclareMathOperator{\increment}{\textrm{next}}

\newcommand{\strpref}{P}
\newcommand{\sympref}{I}

\newcommand{\rsd}[0]{\ensuremath{\mathit{RSD}}\xspace}

\newcommand{\set}[1]{\left\{ #1 \right\}}
\newcommand{\sset}[2]{\left\{ #1 : #2 \right\}}
\newcommand{\abs}[1]{\left| #1 \right|}

\newcommand{\pref}{R\xspace}

\usepackage[textsize=tiny,textwidth=3cm]{todonotes}

% linenumbers
\usepackage{lineno}

%\journal{Discrete Applied Mathematics}

\begin{document}

\begin{frontmatter}

	\title{Parametrized Algorithms for \\Random Serial Dictatorship}

		\author[nicta]{Haris Aziz\corref{cor1}}  \ead{haris.aziz@nicta.com.au}		
	\address[nicta]{NICTA and UNSW, 223 Anzac Parade, Sydney, NSW 2033, Australia, \\Phone: +61 2 8306 0490}

		\author[syd]{Juli\'{a}n Mestre} \ead{mestre@it.usyd.edu.au}
		\address[syd]{School of Information Technologies, The University of Sydney, Australia, \\Phone: +61 2 9351 4276}

\begin{abstract}
	% Voting and assignment are two of the most fundamental settings in social
	% choice theory. For both settings, \emph{random serial dictatorship} (\rsd) is
	% a well-known rule that satisfies anonymity, ex post efficiency, and
	% strategyproofness. Recently, it was shown that computing the resulting
	% probabilities is \#P-complete both in the voting and assignment setting. In
	% this paper, we study \rsd from a parametrized complexity perspective. More
	% specifically, we present efficient algorithms to compute the \rsd
	% probabilities under the condition that the number of agent types,
	% alternatives, or objects is bounded.
	
	Voting and assignment are two of the most fundamental settings in social
	choice theory. For both settings, \emph{random serial dictatorship} (\rsd) is
	a well-known rule that satisfies anonymity, ex post efficiency, and
	strategyproofness. Recently, it was shown that computing the resulting
	probabilities is \#P-complete both in the voting and assignment setting. In
	this paper, we present efficient parametrized algorithms to compute the \rsd
	probabilities for  parameters such as the number of agent types,
	alternatives, or objects. When the parameters are small, then the respective algorithms are considerably more efficient than the naive approach of going through all permutations of agents. 

\end{abstract}

\begin{keyword}
Social choice theory \sep 
random serial dictatorship \sep
random priority\sep
computational complexity \sep
assignment setting.\\
	\emph{JEL}: C6, C7.
\end{keyword}

\end{frontmatter}

% \linenumbers

\section{Introduction}

Voting and assignment are two of the most widely applied and important settings
in social choice theory. 
In both settings, although one could also
use \emph{discrete} or \emph{deterministic} rules, randomization is crucial to achieve minimal fairness
requirements such as anonymity and neutrality.
In voting, \emph{agents} express preferences over
\emph{alternatives} and a \emph{social decision scheme} returns a probability
distribution over the alternatives based on the agents'
preferences~\citep{Barb79b,Gibb77a,Proc10a}. In the assignment setting,
\emph{agents} express preferences over \emph{objects} and a \emph{random
assignment rule} returns a random assignment of the objects specifying the
probability with which each object is allocated to each agent~\citep{BCK11a,
BoMo01a, BCKM12a}. 
The objects are referred to as \emph{houses} in the literature and the assignment setting is also known as \emph{house allocation}. 
For the two settings, \rsd is a desirable social decision
scheme~\citep{ABBH12a, Gibb77a} and random assignment rule~\citep{BoMo01a,CrMo01a}, respectively.

In the voting setting, \emph{random serial dictatorship} (\rsd) takes a permutation of agents uniformly at random
and then selects an alternative by serially allowing agents in the permutation
to refine the set of feasible alternatives. In the assignment setting, \rsd
takes a permutation uniformly at random and then lets the agents in the
permutation serially take their most preferred house that has not yet been
allocated.

For both the settings mentioned above, \rsd is a well-known rule that is
anonymous, strategyproof, and ex post efficient (randomizes over Pareto optimal
alternatives). In fact, it has been conjectured to be the only rule that
satisfies these properties~\citep[see \eg][]{LeSe11a,PaSe13a}. \rsd is well-established
and commonly used especially in resource allocation. In particular, the
resulting probabilities of \rsd can be viewed as fractional allocations in
scheduling and other applications (in which the houses are in fact divisible) and hence important to compute~\citep[see
\eg][]{AbSo98a,BoMo01a,BuCa12a,CrMo01a,Sven94a}. 
Similarly, in voting, the
probabilities returned by the \rsd rule can be interpreted as fractions of
time or resource allotted to the alternatives and hence crucial to compute. The probability of each alternative can also be used as a suggestion for the proportional representation of the alternative in representative democracy or seat allocation of a parliament~\citep{Aziz13b,Tull67a}.

The definition of \rsd for both the settings suggests natural exponential-time algorithms to compute the \rsd probabilities: enumerate all the permutations and for each permutation, perform a linear number of operations. However these algorithms are naive and the question arises whether there are more efficient algorithms to compute the \rsd probabilities. 

Recently, \citet{ABB13b} showed that the resulting probabilities of \rsd are
\#P-complete to compute both in the voting and the assignment settings.
Independently, \citet{SaSe13a} also showed the same result for the assignment
problem. In view of the inherent complexity of computing the \rsd
probabilities, \citet{SaSe13a} mentioned that identifying the conditions under
which the problem is polynomial time as an open problem. \citet{MeSe13a}
propose random hybrid assignment mechanisms that hinge on the $\rsd$
probabilities. They termed the problem of computing \rsd probabilities in the
assignment domain as a ``very difficult research problem.''

In view of the importance of \rsd in both voting and resource allocation and
the recent negative computational results, we undertake an algorithmic study of
\rsd. We show that \rsd is amenable to efficient computation provided certain
structural parameters are small.
% compared to the size of the instance. 
%compared to the size of the instance
More
precisely, we undertake a \emph{parametrized} complexity analysis of \rsd for
both voting and assignment.

% \julian{TODO: I've edited this section quite a bit. Please double check.}

\paragraph{Standard vs.\@ parametrized complexity}

In \emph{standard computational complexity theory}, only the size of the
problem instance is considered as a measure of the problem's complexity. An
algorithm is deemed to be efficient if its running time is bounded by a
polynomial function of the size of the instance; in other words, for every
instance $I$ of the problem, the running time of the algorithm is bounded by
$poly(|I|)$ where $poly$ is a fixed polynomial independent of
the instance and $|I|$ is the size of the instance. Unfortunately, not every
problem is known to admit an efficient algorithm. Indeed, researchers have
identified certain classes of problems that are thought not to admit efficient
algorithms. For example, the fact that $\rsd$ is \#P-complete
\cite{ABB13b,SaSe13a} strongly suggests that there is no efficient algorithm
for computing the $\rsd$ probabilities.

In \emph{parametrized complexity theory}, a finer multivariate analysis is
undertaken by considering multiple \emph{parameters} of the problem
instance~\citep{DoFe13a,Nied06a}. Intuively, a parameter is some aspect of the problem input. For example, for computational problems on graphs, the maximum degree of the graph is a natural parameter.  
Let $k$ be a parameter of an
instance $I$. A problem with parameter $k$ belongs to the class FPT, or is said
to be \emph{fixed-parameter tractable}, if there exists an algorithm that
solves the problem in $f(k) \cdot poly(|I|)$ time, where~$f$ is some computable
function and $poly$ is a polynomial both independent of $I$. The key idea behind FPT is to
separate out the complexity into two components---a component $poly(|I|)$ that
depends solely on the size of the input, and a component $f(k)$ that depends on
the parameter. An FPT algorithm with parameter $k$ can solve instances in which
the input size of the instance is large as long as $k$ is small and hence the
growth of $f(k)$ is relatively small. Of course if $k$ is not small and $f(k)$
is, say, exponential in $k$, then the running time of the FPT algorithm can become
too slow for practical purposes. Nevertheless, designing FPT algorithms with different parameters is important as they expand
tractability frontier for problems that are otherwise intractable in general.

% Note that an algorithm that runs in time ${poly(|I|)}^k$ is
% polynomial time if $k$ is constant. However it is not fixed-parameter
% tractable.
% \haris{reviewers asked for better explanation of FPT. I've written more. Please check.}

\paragraph{Contributions}

We propose algorithms to compute the \rsd probabilities that under reasonable
assumptions are significantly more efficient than the naive method of going
over $n!$ permutations of the agents. To be precise, we present FPT algorithms
with parameters such as \# agent types, \# alternatives, \# alternative types
and \# houses. Many of our algorithms exploit different dynamic programming
formulations where the recursion is based on new insights into \rsd applied to
voting and assignment. In this sense, our work not only yields more efficient
algorithms for computing an \rsd lottery, but also furthers our understanding
of a keystone algorithm in social choice theory.

\section{Voting Setting}

We first define the voting setting and \rsd formally. We follow the
notation used in \citep{ABBH12a}. A voting setting consists of a set
$N=\{1,\ldots, n\}$ of \emph{agents} having preferences over a finite set $A$
of \emph{alternatives} where $|A|=m$. The preferences of agents over alternatives are
represented by a preference profile $\pref=(\pref_1,\ldots, \pref_n)$ where
each agent $i\in N$ has complete and transitive preferences $\pref_i$ over $A$.
By $(a,b) \in \pref_i$, also denoted by $a \mathrel{\pref_i} b$, we mean that
alternative $a$ is at least as preferred by agent $i$ as alternative $b$. We
denote with $\strpref_i$ the strict part of~$\pref_i$ ($a \mathrel{\strpref_i} b$
if~$a \mathrel{\pref_i} b$ but not~$b\mathrel{\pref_i} a$), and with~$\sympref_i$ the
symmetric part of~$\pref_i$ ($a \mathrel{\sympref_i} b$ if~$a \mathrel{\pref_i}
b$ and~$b\mathrel{\pref_i} a$). A preference relation $\pref_i$ is
\emph{linear} if $a \mathrel{\strpref_i} b$ or~$b\mathrel{\strpref_i} a$ for
all distinct alternatives $a,b \in A$.
The size of an instance of a voting setting will be denoted by $|\pref|$ which is equal to $|N|\times |A|$. 

We let $\Pi^N$ denote the
set of all permutations of $N$ and write a permutation $\pi \in \Pi^N$ as
$\pi =  \pi(1) \ldots \pi(n)$. If $R_i$ is a preference relation and
$B\subseteq A$ a subset of alternatives, then $\max_{R_i}(B)=\{ a\in B \colon
a \mathrel{R_i} b \text{ for all }b\in B\}$ is the set of most preferred
alternatives from $B$ according to $R_i$. Let $\pi(i)$ be the $i$-th agent in
permutation $\pi\in \Pi^N$. Then,
\begin{equation*}
	\textit{RSD}(N,A,\pref)=\sum_{\pi\in \Pi^N}
	\frac{1}{n!}\delta_{\text{uniform}}(\mathrm{Prio}(N,A,\pref,\pi))
\end{equation*}
where
\begin{equation*}
	\mathrm{Prio}(N,A,\pref,\pi)=\max_{\pref_{\pi(n)}}(\max_{\pref_{\pi(n-1)}}(\cdots
	(\max_{\pref_{\pi(1)}}(A))\cdots)),
\end{equation*}
and $\delta_{\text{uniform}}(B)$ is the uniform lottery over the multi-set
$B$. In the literature, $\mathrm{Prio}(N,A,\pref,\pi)$ is simply referred to as \emph{serial dictatorship} with respect to ordering $\pi$.
We illustrate how \rsd works with the aid of a simple example.

\begin{example}[Illustration of \rsd in voting]
	Consider the following preference profile.

	\begin{align*}
		1:&\quad a \mathrel{I_1} b \mathrel{I_1} c \mathrel{P_1} d\\
		2:&\quad b \mathrel{I_2} d  \mathrel{P_2} a  \mathrel{P_2} c\\
		3:&\quad  c  \mathrel{P_3} a \mathrel{I_3} b \mathrel{I_3} d 
	\end{align*}

Then let us consider the Prio outcomes for each permutation over the voters.

\begin{align*}
	123:&\quad\{b\}&132:&\quad\{c\}\\
	213:&\quad \{b\}&231:& \quad\{b\}\\
	312:&\quad\{c\}& 321:&\quad \{c\}		 
\end{align*}

Thus the \rsd lottery is $[a:0, b:1/2, c:1/2]$.
\end{example}

If each agent has a unique most preferred alternative, the $\rsd$ lottery can be computed in linear time: the \rsd probability of an alternative is the fraction of agents who express it as most preferred. However, the problem of computing \rsd probabilities become \#P-complete when agents do not express strict preferences.
As seen by the formal definition as well as the example above, \rsd
probabilities can be computed by enumerating all the $n!$ permutations over
agents. 
We present alternative algorithms to compute the \rsd probabilities
that are significantly faster provided certain structural parameters are
small.
% compared to the size of the input instance.

Two agents are said to be of the same \textit{type} if they have identical
preferences. Two alternatives are of the same \textit{type} if every agent is
indifferent between them.  
The following lemma shows that without loss of
generality we can focus on instance where no two alternatives have the same
type.

\begin{lemma}\label{lemma:alternative-types}
	There is a linear-time reduction from general instances of {\sc RSD} to
	simplified instances of {\sc RSD} where no two alternatives have the same
	type.
\end{lemma}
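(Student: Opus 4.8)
The plan is to exploit a structural property of serial dictatorship: for every fixed permutation $\pi$, the outcome $\mathrm{Prio}(N,A,\pref,\pi)$ is \emph{exactly one complete type class} of alternatives. I would establish this in two steps. First, by induction on the number of agents already processed, the current feasible set is always a union of complete type classes: the initial set $A$ is such a union, and if a set $S$ is a union of complete classes then so is $\max_{\pref_i}(S)$, because each agent treats all members of a class identically (if $a$ is among agent $i$'s most preferred alternatives in $S$ and $b$ is of the same type, then $a \mathrel{\sympref_i} b$, so $b$ is most preferred too and $b \in S$). Second, any two alternatives $a,b$ surviving to the final set must be of the same type: if they were of different types, some agent $i$ would have, say, $a \mathrel{\strpref_i} b$; since both $a$ and $b$ survive the entire process they are present at every step, so in particular both are present when $i$ acts, and then $b \notin \max_{\pref_i}(\cdot)$ because $a$ is strictly preferred and present, contradicting the survival of $b$. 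Combining the two observations, $\mathrm{Prio}(N,A,\pref,\pi)$ is a union of complete classes contained in a single class, hence exactly one complete class.

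Next I would turn this structural fact into the reduction. Partition the alternatives into type classes $C_1,\ldots,C_t$ and build the simplified instance by keeping one representative $c_j$ per class $C_j$, recording its multiplicity $\abs{C_j}$, and restricting every $\pref_i$ to the representatives; this restriction is well defined precisely because all members of a class are treated identically. Since the preferences among representatives mirror the preferences among classes, the serial-dictatorship dynamics are identical in both instances, so for each $\pi$ the original outcome is the complete class $C_\pi$ and the simplified outcome is the singleton $\{c_j\}$ with $C_j = C_\pi$. Because each permutation splits its mass uniformly over the single surviving class, the original probability of an alternative $a \in C_j$ equals $\frac{1}{\abs{C_j}}\Pr[\text{class }C_j\text{ is selected}]$, whereas $c_j$ in the simplified instance receives exactly $\Pr[\text{class }C_j\text{ is selected}]$. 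Hence the \rsd probability of each $a \in C_j$ is recovered as the \rsd probability of $c_j$ divided by $\abs{C_j}$, a post-processing step that is linear in the output size.

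Finally I would verify the running time of constructing the simplified instance. Two alternatives are of the same type precisely when their \emph{indifference signatures} coincide, where the signature of an alternative is the length-$n$ vector recording, for each agent $i$, the indifference class of $\pref_i$ to which the alternative belongs. I would compute the type partition by bucket/radix sorting the $m$ signature vectors of length $n$, which takes $O(nm) = O(\abs{\pref})$ time; choosing representatives, recording the multiplicities $\abs{C_j}$, and restricting the preferences then all take $O(\abs{\pref})$ time as well.

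I expect the main obstacle to be the structural lemma, and specifically its second step showing that alternatives of different types can never co-occur in a Prio outcome, as this is the fact that makes the simple divide-by-$\abs{C_j}$ recovery correct. The linear-time claim is then comparatively routine, though it does rely on the point that the type partition must be obtained by signature sorting rather than by naive pairwise comparison of alternatives across all agents, which would be too slow.
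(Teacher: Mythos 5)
Your proposal is correct and follows essentially the same reduction as the paper: contract each type class to a single representative and recover the original probabilities by dividing the representative's probability uniformly over the class. The paper states this in two sentences without proof; your write-up simply supplies the supporting details (that each Prio outcome is exactly one complete type class, and the radix-sort construction of the type partition), all of which are sound.
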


\begin{proof}
	Given an instance $(N,A,\pref)$, we construct an equivalent \emph{simplified}
	instance $(N,A',\pref')$ by contracting all alternatives of the same type
	into a `super' alternative.

	Given an \rsd lottery for $(N, A', \pref')$ we can construct a lottery for
	$(N,A, \pref)$ by uniformly dividing the \rsd probability of the `super'
	alternative among the alternatives in $A$ that induced it. 
\end{proof}

Unless otherwise stated, from now on, we assume that we are dealing with 
\emph{simplified} instances where no two alternatives have the same type.

Let $a \in A$ be a fixed but arbitrary alternative. For each $i\in N$, we
define the signature of $i$ (with respect to $a$) to be $(C,D)$ where $C$ is the subset of alternatives that are as good as $a$, and $D$ is the subset of alternatives that are strictly better than $a$; more formally, $C =
\sset{b}{b \mathrel{\sympref_i} a}$ and $D = \sset{b}{b \mathrel{\strpref_i}
a}$. Notice that even if two agents have different types, they can still have the same signature. On the other hand, if two agents have the same
type, they must have the same signature.

For notational convenience we enumerate all the signatures and denote the set
of signatures with $\mathcal{S} = \set{ 1,2, \ldots }$. Let the $i$th signature
be defined by the pair $(C_i, D_i)$ and let $t_i > 0$ be the number of agents
having this signature. For a subset $X \subseteq \mathcal{S}$ of signatures we
use $t(X)$ to denote $\sum_{i \in X} t_i$. Since we are dealing with a
simplified instance, at least one agent is not indifferent between a given pair of alternatives and hence each Prio outcome results in a singleton set. Hence, it follows that $\cap_{i \in N} C_i = \set{a}$.
In this way, running the serial dictatorship on any permutation of the agents
either does not select $a$ or selects exactly $a$.

We show an FPT algorithm for computing $RSD(R)(a)$ where the parameter
is the number of signatures $\abs{\mathcal{S}}$.

\begin{theorem}  \label{thm:fpt}
  In the voting setting, for each alternative $a \in A$ there is an
  algorithm for computing $RSD(R)(a)$ that runs
  in $O(|R| + m |\mathcal{S}| \cdot 2^{|\mathcal{S}|})$ time.
\end{theorem}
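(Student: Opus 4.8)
The plan is to show that whether $a$ is selected depends only on the order in which the distinct \emph{signatures} first appear in a permutation, and then to sum the probabilities of all ``good'' first‑appearance orders by dynamic programming over subsets of $\mathcal{S}$.

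First I would establish the key structural fact. Write $C(X) = \bigcap_{i \in X} C_i$ for a set $X \subseteq \mathcal{S}$ of signatures, with $C(\emptyset) = A$. Processing the agents of a permutation one by one, as long as $a$ has not been eliminated the current feasible set of serial dictatorship is exactly $C(X)$, where $X$ is the set of signatures that have appeared so far: each surviving step intersects the feasible set with the processing agent's $C_i$, and intersection is idempotent and commutative, so only the \emph{set} of signatures seen matters, not their multiplicities or their order. When an agent whose signature is $s$ is processed and the set of signatures seen strictly before it is $X$, alternative $a$ survives this step iff no feasible alternative is strictly preferred to $a$, that is iff $C(X) \cap D_s = \emptyset$. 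In particular a repeated signature is always harmless, since $s \in X$ gives $C(X) \subseteq C_s$ and $C_s \cap D_s = \emptyset$. Hence $a$ is selected iff, listing the signatures in order of first appearance as $s_1, \ldots, s_{\abs{\mathcal{S}}}$, every $s_j$ satisfies $C(\{s_1, \ldots, s_{j-1}\}) \cap D_{s_j} = \emptyset$; call such an ordering \emph{valid}. (If $a$ is ever eliminated it never returns, and if it always survives the final singleton is $C(\mathcal{S}) = \{a\}$.)

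Next I would compute, over a uniformly random permutation of the $n$ agents, the probability that the signatures first appear in a prescribed order $s_1, \ldots, s_k$. A short exchangeability argument gives $\prod_{j=1}^{k} t_{s_j} / \bigl(n - t(\{s_1, \ldots, s_{j-1}\})\bigr)$: conditioned on exactly the signatures $s_1, \ldots, s_{j-1}$ having appeared, the next new signature is the one of the first‑to‑occur among the $n - t(\{s_1,\ldots,s_{j-1}\})$ not‑yet‑seen agents, so it equals $s_j$ with probability $t_{s_j}$ divided by that count. Summing these weights over all valid orderings yields $RSD(R)(a)$. The crucial point is that the $j$‑th factor's denominator depends only on the \emph{prefix set} $\{s_1, \ldots, s_{j-1}\}$, which is exactly what makes a subset dynamic program possible.

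I would then define $g(X)$ to be the sum, over all valid orderings of the signatures in $X$ viewed as a prefix, of the corresponding product of factors, with $g(\emptyset) = 1$. Removing the last‑placed signature gives the recurrence $g(X) = \sum_{s} g(X \setminus \{s\}) \cdot t_s / \bigl(n - t(X \setminus \{s\})\bigr)$, where the sum ranges over those $s \in X$ with $C(X \setminus \{s\}) \cap D_s = \emptyset$, and the answer is $RSD(R)(a) = g(\mathcal{S})$. Finally I would bound the running time: after reading the profile and grouping agents into signatures in $O(|R|)$ time, one precomputes $C(X)$ and $t(X)$ for all $X$ in $O(m \, 2^{\abs{\mathcal{S}}})$ time via $C(X) = C(X \setminus \{s\}) \cap C_s$, and each of the $O(\abs{\mathcal{S}} \, 2^{\abs{\mathcal{S}}})$ transitions costs $O(m)$ to test the emptiness condition, giving the claimed $O(|R| + m \abs{\mathcal{S}} \cdot 2^{\abs{\mathcal{S}}})$ bound. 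I expect the main obstacle to be the structural lemma---correctly arguing that the feasible set, and hence the survival of $a$, is a function of the \emph{set} of previously seen signatures alone---together with pinning down the first‑appearance‑order probability exactly; once these are in place, the dynamic program and its analysis are routine.
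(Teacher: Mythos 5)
Your proposal is correct and rests on the same core insight as the paper's proof: a subset dynamic program over signatures, where only the set of signatures seen so far matters because the surviving feasible set is $\bigcap_{i \in X} C_i$, and where a signature $s$ may appear next exactly when $D_s \cap \bigcap_{i \in X} C_i = \emptyset$ (this is precisely the paper's ``admissible signature'' condition $\phi$, stated with the complementary indexing convention). The one genuine difference is how the multiplicities $t_i$ are handled. The paper counts \emph{lucky permutations} of the agents: its recurrence peels off the leading admissible signature $i$ and multiplies by $t_i! \binom{t(X)-1}{t_i-1}$ to account for placing the remaining agents of that signature, normalizing by $n!$ at the end. You instead work directly with probabilities, observing that the event ``$a$ is selected'' depends only on the first-appearance order of the signatures, and summing the exchangeability formula $\prod_j t_{s_j}/\bigl(n - t(\{s_1,\dots,s_{j-1}\})\bigr)$ over valid orders. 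Your version avoids the interleaving argument entirely and makes the correctness of the recurrence essentially immediate, at the (small) cost of having to justify the first-appearance-order probability; the paper's version stays purely combinatorial. Both yield the same state space, the same $O(m)$ admissibility test per transition, and the same $O(|R| + m\abs{\mathcal{S}} \cdot 2^{\abs{\mathcal{S}}})$ bound.
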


\begin{proof}
	For each $X \subset \mathcal{S}$ we define a residual problem where the set
	of alternatives is $\bigcap_{i \in \mathcal{S} \setminus X} C_i$, where there
	are no agents with a signature in $\mathcal{S} \setminus X$, and where we
	still have $t_i$ agents with signature $i$ for each $i \in X$. For $X =
	\mathcal{S}$ the residual problem is the same as the original problem. In the
	residual problem defined by $X$ we say a permutation of its agents is
	\emph{lucky} if running the serial dictatorship with this permutation selects
	alternative $a$. Our algorithm is based on a dynamic programming formulation for counting the number of lucky permutations:
	\begin{equation}
	    M[X] =  \text{\# lucky permutations in the residual problem defined by $X$. }
	\end{equation}
	Notice that $RSD(R)(a) = M[\mathcal{S}]/ n!$, so if we can compute $M$, we are done.
	%, so if we can compute MMS guarantee we
	%are done. 
	In the rest of the proof, we derive a recurrence to do just that.

	For each $X \subset \mathcal{S}$ we define the set of \emph{admissible signatures} to be
	\[\phi(X) = \sset{i \in X}{D_i \cap \bigcap_{j \in \mathcal{S} \setminus X} C_j=\emptyset}\]
	and
	\[ \phi(\mathcal{S}) = \sset{i \in \mathcal{S}}{ D_i = \emptyset}.\]
	The key observation is that every lucky permutation in the residual problem
	defined by $X$ must start with an agent having an admissible signature.

	To compute
	the value of the DP (dynamic program) states, we use the following recurrence.

	\begin{equation}
	    \label{eq:recurrence}
	    M[X] =  
	    \begin{cases}
	        t(X)! & \text{if } \phi(X) = X \\
	        0 & \text{if } \phi(X) = \emptyset \\
	        \displaystyle \sum_{i \in \phi(X)} t_i! { t(X) -1 \choose t_i - 1} M[X \setminus \set{i} ] & \text{otherwise}
	    \end{cases}
	\end{equation}

	Let us briefly justify each case of the recurrence. First, consider the case
	$\phi(X) = X$, which means that in the residual problem defined by $X$ every
	agent has $a$ in its top equivalence class. If that is the case, then every
	permutation of the agents is lucky. Since there are $t(X)$ agents in the
	residual problem, it follows that there are $t(X)!$ lucky permutations.

	Second, consider the case $\phi(X) = \emptyset$, which means that in the
	residual problem defined by $X$, not a single agent has $a$ in its top
	equivalence class. If that is the case, then there are no lucky permutations.

	Finally, consider the case $\emptyset \subset \phi(X) \subset X$. Recall that
	every lucky permutation must begin with an agent with a signature $i \in
	\phi(X)$. Notice that after such an agent is chosen the set
	of possible alternatives is reduced to $\bigcap_{j \in \mathcal{S} \setminus
	(X \cup \set{i})} C_j$. Also, the remaining agents $(t_i - 1)$ agents with
	signature  $i$ do not further constrain the set of alternatives. Therefore, if
	we take a lucky permutation in the residual problem defined by $X$ and we
	strip from it all agents with signature $i$, we are left with a lucky
	permutation for the residual problem defined by $X \cup \set{i}$. Similarly,
	if we take a lucky permutation for the residual problem defined by $X \cup
	\set{i}$ and we prepend one agent with signature $i$ and insert the remaining
	$t_i-1$ agents with signature $i$ any way we want, we have a lucky
	permutation for the residual problem defined by $X$. Notice that for each
	lucky permutation for $X \cup \set{i}$ there are ${ t(X) - 1
	\choose t_i -1 }$ ways of choosing the positions for the agents with
	signature $i$ and for each one of those there are $t_i!$ ways of distributing
	the individual agents. If follows that there are $\sum_{i \in \phi(X)} t_i!
	{t(X) - 1	\choose t_i -1} M[ X \setminus \set{i}]$ lucky permutations for the residual
	problems defined by $X$.

	Computing the signatures $\mathcal{S}$ and their frequencies $t_1, \ldots,
	t_{\abs{\mathcal{S}}}$ takes $O(\abs{R})$ time, 
	% \haris{Reviewer 2:  I do not believe in this statement. I think computing signatures requires looking at all alternatives for each preference order, which leads to time $O(m|R|)$} \julian{This is not an issue, since we are working with a fixed alternative $a$. I modified the statement of the theorem to make this more explicit.} 
	where $|R|$ is the total
	length of the agent preferences. 	Computing the admissible signature function
	$\phi$ can be done in $O(m \abs{\mathcal{S}} \cdot 2^{\abs{\mathcal{S}}})$
	time, where $m$ is the number of alternatives. The size of the DP table is
	$2^{|\mathcal{S}|}$, and given all the previous information, computing each
	entry of the table takes $O(\abs{\mathcal{S}})$ time. Hence, the total time
	to compute $RSD(R)(a)$ is $O(|R| + m |\mathcal{S}| \cdot 2^{|\mathcal{S}|}
	)$.
\end{proof}

\begin{corollary}\label{cor:numberofagents}
  There is an FPT algorithm for computing $RSD(R)(a)$ with parameter $n =
  \text{\# of agents}$. The running time is $O(|R| + m n \cdot 2^n)$.       \end{corollary}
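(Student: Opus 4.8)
The plan is to obtain this statement as an immediate consequence of Theorem~\ref{thm:fpt} by bounding the number of signatures in terms of the number of agents. First I would recall that the signature of an agent (with respect to the fixed alternative $a$) is determined entirely by that single agent's preference relation, so each of the $n$ agents induces exactly one signature. Distinct agents may of course share a signature, and indeed this is precisely what the multiplicities $t_i$ record; but the set $\mathcal{S}$ of \emph{distinct} signatures can contain at most one element per agent. Since each $t_i > 0$, the agents partition into $\abs{\mathcal{S}}$ nonempty groups according to their signature, and therefore $\abs{\mathcal{S}} \le n$.

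Next I would substitute this bound into the running time guaranteed by Theorem~\ref{thm:fpt}, namely $O(\abs{R} + m\abs{\mathcal{S}} \cdot 2^{\abs{\mathcal{S}}})$. Since the map $x \mapsto m x \cdot 2^{x}$ is nondecreasing for $x \ge 0$, replacing $\abs{\mathcal{S}}$ by the (possibly) larger quantity $n$ can only increase the expression, which yields the stated bound $O(\abs{R} + m n \cdot 2^{n})$. Because $n$ is a legitimate parameter of the instance and the bound has the required $f(n) \cdot poly(\abs{R})$ shape, this exhibits the algorithm as FPT with parameter $n$ and completes the argument.

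The only point requiring any care---and it is hardly an obstacle---is to keep the distinction between the number of \emph{distinct} signatures $\abs{\mathcal{S}}$ and the multiplicities $t_i$ clear, so that the correct quantity is bounded by $n$. Once $\abs{\mathcal{S}} \le n$ is in hand, the monotonicity substitution is purely mechanical, and I anticipate no genuine difficulty in this corollary beyond invoking the previous theorem.
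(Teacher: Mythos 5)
Your proposal is correct and follows essentially the same route as the paper: bound the number of distinct signatures by $\abs{\mathcal{S}} \le n$ (since each agent contributes exactly one signature) and then substitute into the running time of Theorem~\ref{thm:fpt}, noting the bound has the required FPT form. Your justification of $\abs{\mathcal{S}} \le n$ is in fact slightly cleaner than the paper's phrasing, which bounds it via agent types, but the argument is the same.
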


% \begin{proof}
% 	Two agents of the same type must have the same signature, so $|\mathcal{S}| \leq n$. The corollary then follows from Theorem~\ref{thm:fpt}.
% \end{proof}
\begin{proof}
	Two agents of the same type must have the same signature, so $|\mathcal{S}| \leq n$. 
	%The corollary then follows from Theorem~\ref{thm:fpt}. 
	It follows from Theorem~\ref{thm:fpt} that the running time is $O(|R| + m n \cdot 2^n)$ $=O(|R|\cdot (1+2^n))=O(poly(|R|)\cdot (1+2^n))$ and hence the algorithm is FPT with parameter $n$.
\end{proof}

\begin{corollary}\label{cor:numberofagenttypes}
	There is an FPT algorithm for computing $RSD(R)(a)$	with parameter $T = \text{\# of agent types}$. The running time is $O(|R|+ m T \cdot 2^T)$.
\end{corollary}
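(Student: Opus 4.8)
The plan is to deduce this directly from Theorem~\ref{thm:fpt}, exactly as Corollary~\ref{cor:numberofagents} does, by bounding the number of signatures $\abs{\mathcal{S}}$ in terms of the number of agent types $T$. First I would recall that the signature $(C,D)$ of an agent $i$ with respect to the fixed alternative $a$ is defined purely from $i$'s preference relation, since $C = \sset{b}{b \mathrel{\sympref_i} a}$ and $D = \sset{b}{b \mathrel{\strpref_i} a}$ depend only on $\pref_i$. Hence any two agents of the same type, i.e.\ with identical preferences, induce the same signature; this is precisely the observation already noted before Theorem~\ref{thm:fpt}.

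Next I would turn this into the bound $\abs{\mathcal{S}} \le T$. The previous paragraph gives a well-defined map sending each agent type to its signature, and every signature in $\mathcal{S}$ actually occurs (each has $t_i > 0$), so it is the signature of some present type; the map is therefore onto $\mathcal{S}$, which yields $\abs{\mathcal{S}} \le T$. Substituting this into the running time of Theorem~\ref{thm:fpt} and using that $x \mapsto x\,2^{x}$ is increasing gives
\[
O\bigl(|R| + m\,\abs{\mathcal{S}} \cdot 2^{\abs{\mathcal{S}}}\bigr) \;=\; O\bigl(|R| + m\,T \cdot 2^{T}\bigr),
\]
which is the claimed bound.

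Finally I would check that this is genuinely an FPT running time in the parameter $T$. Since $|R| = |N| \times |A| = nm$, we have $m \le |R|$ and $T \le n \le |R|$, so $m\,T \le poly(|R|)$ and the running time is $O(poly(|R|)\cdot(1 + 2^{T}))$, which has the required form $f(T)\cdot poly(|R|)$. I do not anticipate any real obstacle here: the only content is the remark that the signature map factors through agent types, and the rest is a direct appeal to Theorem~\ref{thm:fpt} together with the monotonicity of $x\,2^{x}$.
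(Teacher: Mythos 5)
Your proposal is correct and follows essentially the same route as the paper: bound $\abs{\mathcal{S}} \le T$ via the observation that agents of the same type share a signature, then plug into Theorem~\ref{thm:fpt} and check the FPT form. The only difference is that you spell out the surjectivity of the type-to-signature map and the monotonicity of $x \cdot 2^x$, which the paper leaves implicit.
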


% \begin{proof}
% 	This follows from Theorem~\ref{thm:fpt} and the fact that in the worst case each agent type has a different signature, so $|\mathcal{S}| \leq T$.
% \end{proof}
\begin{proof}
In the worst case, each agent type has a different signature, so $|\mathcal{S}| \leq T$. It follows from Theorem~\ref{thm:fpt} that the running time is $O(|R|+ m T \cdot 2^T)=O(|R|\cdot (1+T \cdot 2^T))=O(poly(|R|)\cdot (1+T \cdot 2^T))$ and hence the algorithm is FPT with parameter $T$.
\end{proof}

\begin{corollary}\label{cor:alternatives}
    There is an FPT algorithm for computing $RSD(R)(a)$ with parameter $m =
    \text{\# of alternatives}$. The running time is $O(|R| + m \cdot 3^m \cdot 2^{3^m})$.
\end{corollary}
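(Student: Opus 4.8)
The plan is to bound the number of distinct signatures $|\mathcal{S}|$ purely as a function of $m$ and then invoke Theorem~\ref{thm:fpt}. Recall that the signature of an agent $i$ with respect to the fixed alternative $a$ is the pair $(C_i, D_i)$, where $C_i = \sset{b}{b \mathrel{\sympref_i} a}$ collects the alternatives indifferent to $a$ and $D_i = \sset{b}{b \mathrel{\strpref_i} a}$ collects those strictly better than $a$. Because each agent has complete preferences, every alternative $b \in A$ falls into exactly one of three classes relative to $a$: it lies in $D_i$ (strictly better than $a$), in $C_i$ (indifferent to $a$), or in neither of the two (strictly worse than $a$, i.e.\ $a \mathrel{\strpref_i} b$).

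First I would observe that a signature is thus completely determined once each of the $m$ alternatives is assigned to one of these three classes. This immediately gives the crude but sufficient bound $|\mathcal{S}| \le 3^m$. One could tighten this slightly by noting that $a$ itself always satisfies $a \mathrel{\sympref_i} a$, so $a \in C_i$ for every agent and only the remaining $m-1$ alternatives are free, yielding $|\mathcal{S}| \le 3^{m-1}$; but the looser $3^m$ bound already suffices for the claimed running time.

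Finally, substituting $|\mathcal{S}| \le 3^m$ into the running time $O(|R| + m|\mathcal{S}| \cdot 2^{|\mathcal{S}|})$ guaranteed by Theorem~\ref{thm:fpt} yields $O(|R| + m \cdot 3^m \cdot 2^{3^m})$. Since $3^m \cdot 2^{3^m}$ is a computable function of $m$ alone and the remaining factors are polynomial in the input size, the running time is of the form $f(m)\cdot poly(|R|)$, so the algorithm is FPT with parameter $m$, exactly as in the proofs of Corollaries~\ref{cor:numberofagents} and~\ref{cor:numberofagenttypes}.

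I do not expect any genuine obstacle in this argument; the entire content is the three-way classification that shows each alternative contributes one of three states to a signature, and everything else is a direct specialization of Theorem~\ref{thm:fpt}.
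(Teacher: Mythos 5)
Your proposal is correct and follows essentially the same route as the paper: bound the number of signatures by $3^m$ via a three-way classification of each alternative relative to $a$, then plug $|\mathcal{S}| \le 3^m$ into Theorem~\ref{thm:fpt}. Your direct ``each alternative gets one of three states'' count is in fact a cleaner version of the paper's sum $\sum_{k=1}^m \binom{m}{k}2^{m-k}$, and your observation that $a\in C_i$ always holds (giving $3^{m-1}$) is a valid minor tightening that the paper does not use.
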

\begin{proof}
	This follows from Theorem~\ref{thm:fpt} and the fact that if there are $m$
	alternatives, there are at most $3^m$ different signatures $(C_i, D_i)$
	because
	\begin{equation*}
		\sum_{k = 1}^m {m \choose k} 2^{m-k} = 3^m,
	\end{equation*}
	where the first term of the left-hand-side product is the number of way of
	choosing a subset $C_i$ of size $k$ and second term is how many choices we
	have for $D_i$ given that $|C_i|=k$ and $C_i \cap D_i = \emptyset$.
	Therefore, $\abs{\mathcal{S}} \leq 3^m$.
\end{proof}

% \begin{proof}
% 	This follows from Theorem~\ref{thm:fpt} and the fact that if there are $m$
% 	alternatives, there are at most $3^m$ different signatures $(C_i, D_i)$
% 	because
% 	\begin{equation*}
% 		\sum_{k = 1}^m {m \choose k} 2^{m-k} = 3^m,
% 	\end{equation*}
% 	where the first term of the left-hand-side product is the number of way of
% 	choosing a subset $C_i$ of size $k$ and second term is how many choices we
% 	have for $D_i$ given that $|C_i|=k$ and $C_i \cap D_i = \emptyset$.
% 	Therefore, $\abs{\mathcal{S}} \leq 3^m$.
% \end{proof}

\begin{corollary}\label{cor:alternativestypes}
  There is an FPT algorithm for computing $RSD(R)(a)$ with parameter $q =
  \text{\# of alternative types}$. The running time is $O(|R| + q \cdot 3^q \cdot 2^{3^q})$.
\end{corollary}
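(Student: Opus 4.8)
The plan is to reduce this to the already-established Corollary~\ref{cor:alternatives}, using Lemma~\ref{lemma:alternative-types} as a preprocessing step. The key observation is that the number of alternatives in a \emph{simplified} instance equals precisely the number of alternative types in the original instance: contracting each type into a single `super' alternative collapses exactly those alternatives that no agent can distinguish, leaving one alternative per type. Hence an instance with $q$ alternative types becomes, after simplification, an instance with exactly $m = q$ alternatives, and we can invoke the $m$-parametrized algorithm with $m = q$.

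Concretely, I would first apply the linear-time reduction of Lemma~\ref{lemma:alternative-types} to transform the given instance $(N, A, \pref)$ into a simplified instance $(N, A', \pref')$ with $|A'| = q$. This costs $O(|R|)$ time, and since the number of agents is unchanged while the number of alternatives drops from $m$ to $q$, the size $|R'|$ of the simplified instance satisfies $|R'| \le |R|$. Let $a' \in A'$ be the super-alternative into which $a$ was contracted, and let $k$ be the number of alternatives of $a$'s type. I would then run the algorithm of Corollary~\ref{cor:alternatives} on $(N, A', \pref')$ to compute $RSD(R')(a')$; because this instance has only $q$ alternatives, its running time is $O(|R'| + q \cdot 3^q \cdot 2^{3^q})$, which is absorbed into $O(|R| + q \cdot 3^q \cdot 2^{3^q})$.

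The final step is to recover $RSD(R)(a)$ from $RSD(R')(a')$. By the construction in the proof of Lemma~\ref{lemma:alternative-types}, the \rsd probability of a super-alternative is divided uniformly among the alternatives that induced it, so $RSD(R)(a) = RSD(R')(a') / k$. Summing the cost of the reduction and of the Corollary~\ref{cor:alternatives} call yields the claimed bound $O(|R| + q \cdot 3^q \cdot 2^{3^q})$, and since $q$ appears only inside the instance-independent factor $q \cdot 3^q \cdot 2^{3^q}$, this is an FPT running time in the parameter $q$.

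This argument is essentially bookkeeping, so there is no serious obstacle; the only point meriting care is the last recovery step. One must confirm that the equivalence guaranteed by Lemma~\ref{lemma:alternative-types} genuinely lets us read off the probability of the \emph{specific} alternative $a$ rather than merely of its type class $a'$, which is exactly what the uniform-division clause of that lemma provides. Everything else follows directly from the two cited results.
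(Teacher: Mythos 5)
Your proposal is correct and matches the paper's proof, which simply cites Lemma~\ref{lemma:alternative-types} and Corollary~\ref{cor:alternatives}; you have merely spelled out the same reduction in more detail. The added care about recovering $RSD(R)(a)$ from the super-alternative's probability via uniform division is exactly what the lemma's construction provides, so nothing is missing.
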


\begin{proof}
	This follows from Lemma~\ref{lemma:alternative-types} and
	Corollary~\ref{cor:alternatives}.
\end{proof}

\section{Assignment Setting}
\label{sec:domain}

An \emph{assignment setting} is a triple $(N, H, \pref)$, where $N$ is a set of $n$ agents, $H$ is a set of $m\leq n$ houses, and
$\pref=(\pref_1,\ldots, \pref_n)$ is a preference profile that contains, for
each agent $i$, a \emph{linear} preference relation on some subset of houses that are \emph{acceptable} to the agent. If a house is \emph{unacceptable} to an agent, then he would prefer not to get any house than being allocated an unacceptable house. 
The size of an instance of an assignment setting will be denoted by $|\pref|$ which is equal to $|N|\times |H|$. 
Every randomized assignment yields a \emph{fractional assignment} that
specifies, for every agent $i$ and every house $h$, the probability $p_{ih}$
that house~$h$ is assigned to agent $i$. The fractional assignment can be seen
as a compact representation of the randomized assignment.  \rsd takes a
permutation uniformly at random and then lets the agents in the permutation
serially take their most preferred house that has not yet been allocated. If no acceptable houses remain, then the agent takes no house. 

It is easily observed that the assignment setting is a special case of a social
choice problem where $A$, the set of alternatives is the set of all discrete
assignments and the preferences of agents over $A$ are induced by their
preferences over $H$. Although agents have strict preferences
over the houses, they are indifferent among all assignments in which they
are allocated the same house. We illustrate how \rsd works as a random
assignment rule.

\begin{example}[Illustration of \rsd for random assignment]
	Consider the following preference profile.
		\begin{align*}
		1:&\quad a \mathrel{P_1} b \mathrel{P_1} c \\
		2:&\quad a \mathrel{P_2} b  \mathrel{P_2} c\\ 
		3:&\quad  b  \mathrel{P_3} a \mathrel{P_3} c
	\end{align*}

	Agent $1$ and $2$ are of the same type since they have the same preferences.
	Let us consider the Prio outcomes for each permutation over the voters. Each
	outcome is a discrete matching.
	\begin{align*}
		123: & \quad \{\{1,a\},\{2,b\},\{3,c\}\} &
	  132: & \quad \{\{1,a\},\{3,b\},\{2,c\}\} \\
		213: & \quad \{\{2,a\},\{1,b\},\{3,c\}\} &
		231: & \quad \{\{2,a\},\{3,b\},\{1,c\}\} \\
		312: & \quad \{\{3,b\},\{1,a\},\{2,c\}\} &
		321: & \quad \{\{3,b\},\{2,a\},\{1,c\}\}	 
	\end{align*}

	Thus the \rsd fractional assignment is obtained by taking the uniform convex
	combination of the discrete assignments for each of the permutations (see
	Table~\ref{table:randomassigmentRSD}).

	\begin{table}[h!]
		\centering	
		\begin{tabular}{l|lll}

	  \toprule
		&$a$&$b$&$c$\\ \midrule
		$1$&$1/2$&$1/6$&$1/3$\\
		$2$&$1/2$&$1/6$&$1/3$\\
		$3$&$0$&$2/3$&$1/3$\\
		\bottomrule

		\end{tabular}
		\caption{Fractional/randomized assignment as a result of \rsd.}
		\label{table:randomassigmentRSD}
	\end{table}

\end{example}

Although the assignment setting is a subdomain of social choice (voting), it does not mean that positive algorithmic results for voting imply the same for the assignment domain. The reason is that the transformation from an assignment setting to a corresponding voting problem leads to an exponential blowup in the number of alternatives.
Hence, results in the previous section %\ref{sec:voting} 
do not carry over directly to the domain of assignments.

\citet{SaSe13a} showed that it is not possible (under suitable complexity-theoretic assumptions) to design an efficient algorithm for approximating the \rsd probabilities  even if randomization is used.
% \citet{SaSe13a} showed that that there does not exist an FPRAS (fully polynomial-time randomized approximation scheme) for computing the \rsd probabilities in the assignment domain unless . 
		Since neither randomization nor approximation is helpful, this further motivates a parametrized algorithm approach. The main result in this section is an FPT algorithm with composite parameter
number of houses and number of agent types. From this result we derive two
corollaries: There is a polynomial-time algorithm when the number of agent
types is constant, and an FPT algorithm with parameter number of houses.

% \haris{Reviewer 1: the author claims that ``there is a polynomial time
% algorithm when the number of agent types is constant'' but this result never
% appears in the paper, even though it is mentioned several time. Haris: The way
% theorem 2 has been phrased, I don't see how the problem is polynomial as well
% although I recall that the we had the argument for polynomial for constanr
% types before.}
% \julian{My bad, the running time is better than whatever was stated.}

We use $T$ to denote the number of agent types, and $d_j$ to denote the number
of agents of type $j$ in the instance. With a slight abuse of notation, we will
use $\pref_j$ to denote preference order of an agent of type $j$.

\begin{theorem}\label{th:fptassign}
	In the assignment setting, for each $i \in N$ and $h \in H$ there is an algorithm for computing
	$RSD(R)(i)(h)$ running in $O(|R| + { m+T \choose T } \cdot T \cdot 
	m^{T+1})$~time.
\end{theorem}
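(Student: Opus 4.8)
The plan is to reduce the computation of $RSD(R)(i)(h)$ to a single probability over the random permutation and then evaluate that probability with a dynamic program whose state is a compact encoding of the set of already-allocated houses. Let $j^\ast$ be the type of agent $i$. Since all $d_{j^\ast}$ agents of type $j^\ast$ are exchangeable under a uniformly random permutation, each of them receives $h$ with the same marginal probability, and since $h$ is handed out to at most one agent we have
$$ RSD(R)(i)(h) \;=\; \frac{1}{d_{j^\ast}}\,\Pr[\,h \text{ is allocated to some agent of type } j^\ast\,]. $$
If $h$ is not acceptable to type $j^\ast$ this probability is $0$, so assume $h$ lies on $j^\ast$'s list and let $B$ be the prefix of houses that type $j^\ast$ ranks strictly above $h$. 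A type-$j^\ast$ agent grabs $h$ exactly when it is processed at a moment where every house of $B$ is already taken while $h$ is still free. As a first step I would compute the type multiset and the sets $B$ in $O(|R|)$ time.

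The key structural observation, and the step I expect to be the crux, is that throughout serial dictatorship the set $S$ of allocated houses is always a \emph{union of prefixes of the type preference lists}. Indeed, whenever an agent of type $j$ grabs a house $x$, house $x$ is by definition its most preferred \emph{available} house, so at that instant every house that type $j$ ranks above $x$ is already allocated; as allocations only grow, the whole prefix of $j$'s list up to $x$ stays inside $S$ forever after. Consequently each $x \in S$ is contained in a fully-allocated prefix of the list of whatever type grabbed it, so $S = \bigcup_{j=1}^{T} \mathrm{Prefix}_j(p_j)$, where $p_j$ is the length of the longest prefix of type $j$'s list contained in $S$. Thus the entire allocated set is encoded by the pointer vector $(p_1,\dots,p_T) \in \{0,1,\dots,m\}^T$, collapsing the naive $2^m$ possibilities for $S$ into at most $(m+1)^T$ reachable states. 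This is exactly the structure that makes a $T$-parametrized algorithm possible.

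With this encoding I would set up a dynamic program that computes, for each reachable allocated set $S$ with $B \subseteq S$ and $h \notin S$, the probability that $S$ is precisely the set of houses taken just before $h$ is first grabbed; summing the contribution of those states in which the agent about to grab $h$ is of type $j^\ast$ then yields the target probability. The program is built by adding the house-grabbing agents one at a time in the order induced by the random permutation: from a state $S$ the next grabber is an agent of some type $j$, it takes the top house of $j$'s list not in $S$ (advancing the pointer $p_j$), and this is the only place where the pointer representation is consulted. As in the proof of Theorem~\ref{thm:fpt}, the delicate part is converting these ordered choices into the correct permutation probabilities: agents whose entire list is already allocated grab nothing and must be treated as \emph{transparent}, while the agents of each type must be interleaved using binomial coefficients in the same spirit as the recurrence of Theorem~\ref{thm:fpt}. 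Carrying out this bookkeeping correctly is where the real work lies, and I expect it to be the main obstacle.

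Finally I would bound the running time. Because at most $m$ houses are ever allocated, the profile $(a_1,\dots,a_T)$ recording how many houses each type has grabbed satisfies $\sum_j a_j \le m$, which contributes the $\binom{m+T}{T}$ factor counting these profiles; together with the pointer information needed to pin down which specific houses were taken, this gives a table of size $O\!\big(\binom{m+T}{T} \cdot m^{T}\big)$, and each entry is updated by ranging over the $T$ candidate next types and the relevant position in a preference list, at cost $O(Tm)$. Adding the $O(|R|)$ preprocessing yields the claimed $O\!\big(|R| + \binom{m+T}{T} \cdot T \cdot m^{T+1}\big)$ bound.
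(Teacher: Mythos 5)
Your setup is essentially the paper's: the state you propose --- a per-type count of houses grabbed together with a per-type pointer into the preference lists --- is exactly the paper's pair $(\vec{s},\vec{b})$ (with $s_j = d_j - a_j$), your ``union of prefixes'' observation is precisely why the paper's $\vec{b}$ suffices to encode the allocated set, and your table size $\binom{m+T}{T}\cdot m^T$ and per-entry cost match the paper's count of reachable states. The exchangeability reduction to $\frac{1}{d_{j^*}}\Pr[h\text{ goes to some type-}j^*\text{ agent}]$ is a small, valid departure; the paper instead keeps agent $i$ explicit and splits the leading-agent case into ``$i$ itself'' versus ``another agent of type $j^*$.''

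The gap is that the recurrence --- the actual content of the proof --- is never derived; you flag it yourself as ``the main obstacle.'' The specific point you must argue is that the conditional law of the type of the next grabber depends only on the state $(S,\vec{a})$ and not on the finer history of which agents were passed over and when; without this, a DP on your state space is not even well defined. The argument that closes it: an agent is passed over only when every acceptable house of its type already lies in the current allocated set; since that set only grows, such a type can never produce a grabber later, so for every type $j$ that still has an acceptable house outside $S$ no type-$j$ agent has ever been passed over, and the number of its agents still in play is exactly $d_j-a_j$. Hence the next grabber has live type $j$ with probability $(d_j-a_j)/\sum_{j'\text{ live}}(d_{j'}-a_{j'})$, and in this probabilistic formulation no binomial interleaving is needed at all. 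The paper resolves the same issue in its permutation-counting version via the corner case $b_j=\mathrm{nil}$, $s_j>0$, which flushes the remaining agents of a dead type with a multinomial factor before recursing. As written, your proposal is a correct plan with the right data structure and the right complexity accounting, but it stops short of the step that constitutes the proof.
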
	

\begin{proof}
	Let $\vec{s} = (s_1, \ldots, s_T)$ be a vector of integers where $0 \leq s_j
	\leq d_j$ for all $j=1, \ldots, T$. 
	% \haris{``dominated by'' or ``dominates'' because house in $\dom(\vec{b})$
	% is more preferred than house in $\vec{b}$}
	Also let $\vec{b}=(b_1, \ldots, b_T)$ where $b_j \in H \cup
	\set{\mathrm{nil}}$ for $j = 1, \ldots, T$. Multiple entries of $\vec{b}$ can have the same house.
	Let $\dom(\vec{b})$ be those
	houses that are \emph{dominated} by the allocation $\vec{b}$, namely, for
	each house $h' \in \dom(\vec{b})$, there must be an agent type $j$ that
	prefers $h'$ to $b_j$; more formally,
	\begin{align*}
		\dom(\vec{b}) & =
		\sset{h' \in H}{ \exists \, j :
			\begin{aligned}
				& b_j \neq \textrm{nil} \quad \wedge \quad h' \mathrel{\strpref_j} b_j, \text{ or} \\
				& b_j = \textrm{nil} \quad \wedge \quad h' \text{ is acceptable to type } j
			\end{aligned}
		  }.
	\end{align*}

	% \haris{It will be good to provide intuition: $b_j$ is the house most preferred by agent of type $j$ that has not been allocated yet. }
	% \julian{I think it is okay to leave it as it is.}

	For each $(\vec{s}, \vec{b})$ we define a residual problem where the set of
	houses available is $H \setminus \dom(\vec{b})$ and there are $s_j$ agents of
	type $j$ for each $j = 1, \ldots, T$. 
	 Intuitively, $\vec{b}$ is maintained in a way so that each $b_j$ is the house most preferred by agent type $j$ that has yet not been allocated. 
	Let $j^*$ be the type of agent $i$ from
	the theorem statement. Consider a permutation of the agents where there are
	$s_j$ agents of type $j$ and $i$ is one of the agents of type $s_{j^*}$---the
	precise identity of the other agents is not important. We say such a
	permutation is  \emph{lucky} if running serial dictatorship with this
	permutation on the residual instance results in assigning $i$ to $h$.

	We again use a dynamic programming formulation based on counting lucky
	permutations in the residual problems
	\begin{equation*}
		M[\vec{s}, \vec{b}] =
			\text{\# lucky permutations in the residual problem defined by $(\vec{s}, \vec{b})$}.
	\end{equation*}

	Our goal now is to derive a recurrence relation for $M[\vec{s}, \vec{b} ]$.
	We begin with some base cases. First, if there are no agents of type $j^*$
	left or if the house $h$ is not available anymore, then there are no lucky
	permutations; more formally,
	\begin{equation*}
		M[\vec{s}, \vec{b}] = 0 \quad \text{ if } s_{j^*} = 0 \text{ or } h \in \dom(\vec{b}).
	\end{equation*}
	
We say that $\vec{b}$ is \emph{degenerate} if $b_j \in \dom(\vec{b})$ for
	some $j$. For such degenerate vectors, let us define $\closure(\vec{b})$ to
	be the vector $\vec{\tilde{b}}$ such that $\tilde{b}_j$ is the most preferred
	house by agents of type $j$ that does not belong to $\dom(\vec{b})$ or
	$\textrm{nil}$ if all acceptable houses for agents of type $j$ belong to
	$\dom(\vec{b})$. It follows that if $\vec{b}$ is degenerate
	then
	\begin{equation*}
		M[ \vec{s},\vec{b}] = M[\vec{s},\closure(\vec{b})]. 
	\end{equation*}

	Let $\increment_j(h')$ be the house that comes after $h'$ in the total order
	of type $j$ preferences. If $h'$ happens to be the last acceptable house for
	type-$j$ agents then $\increment_j(h') = \textrm{nil}$. It is worth noting
	that $\closure(\vec{b})$ can be defined recursively in terms of the
	$\increment$ operator:
	\begin{equation*}
		\closure(\vec{b}) = 
		\begin{cases}
			\closure(\vec{b}_{-j}, \increment_j(b_j))
			& \text{if } \exists \, j : b_j \in \dom(\vec{b}), \\
			\vec{b} 
			& \text{if } \forall \, j : b_j \notin \dom(\vec{b}),
		\end{cases}
	\end{equation*}
	where the notation $(\vec{b}_{-j}, x)$ denotes the vector that is identical
	to $\vec{b}$ except for coordinate $j$, which takes the value $x$; in other
	words, 
	
	\[(\vec{b}_{-j}, x) = (b_1, \ldots, b_{j-1}, x, b_{j+1}, \ldots, b_T).\]
	
Our final corner case is that where there is an agent of type $j$ such that
	$b_j = \textrm{nil}$ and $s_j > 0$. In this case it does not matter where the
	remaining $s_j$ agents of type $j$ are placed in the permutation since none
	of them will get a house. Therefore, we get
	\[ M[\vec{s},\vec{b}] = { s_1 + \cdots + s_T \choose s_j} s_j! 
		\cdot M[(\vec{s}_{-j}, 0), \vec{b}]. \]

	For the recursive case of the recurrence, we condition on the type of 
	agent that is chosen to lead the lucky permutation. If the agent is of type
	$j \neq j^*$ and $b_j \neq \textrm{nil}$, assuming $s_j > 0$, there are $s_j$
	agents to choose from, so the number of such permutations will be \[ s_j
	\cdot M[(\vec{s}_{-j}, s_j -1), (\vec{b}_{-j}, \increment_j(b_j))]. \]
	If the agent is of type $j^*$, assuming $s_{j^*} > 0$, and the agent is not
	$i$, then the number of such permutations is \[ (s_{j^*} - 1) \cdot
	M[(\vec{s}_{-j^*}, s_{j^*} -1), (\vec{b}_{-j^*}, \increment_{j^*}(b_{j^*})]. \] Finally, if the agent of type
	$j^*$ leading the lucky permutation is $i$ itself, then the number of such
	permutations is
	\[ \begin{cases}
		(-1 + \sum_{j} s_j)! & \text{if } b_{j^*} = h \\
		0 & \text{if } b_{j^*} \neq h. 
	\end{cases}\]

	Putting everything together we get the following recurrence for
	$(\vec{s}, \vec{b})$ for the case when $\vec{b}$ is non-degenerate, $h \notin \dom(\vec{b})$, $s_{j^*} > 0$:
\begin{align} \label{eq:recurrenceassignment}
		 M[\vec{s}, \vec{b}] =
		 & \sum_{\mathclap{j \neq j^* : \atop s_j > 0 \wedge b_j \neq \textrm{nil}}} s_j \cdot M[(\vec{s}_{-j}, s_j -1), (\vec{b}_{-j}, \increment_j(b_j))] \notag \\
		 		   &+ (s_{j^*} - 1) \cdot M[(\vec{s}_{-j^*}, s_{j^*} -1), (\vec{b}_{-j^*}, \increment_{j^*}(b_{j^*})] \notag \\
		 &+ \begin{cases}
			(-1 + \sum_{j} s_j)! & \text{if } b_{j^*} = h \\
			0 & \text{if } b_{j^*} \neq h 
		\end{cases}
	\end{align}

	Once the table is filled, the probability we are after is simply
	\begin{equation*}
		RSD(R)(i)(h) = \frac{M[(d_1, \ldots, d_T), (h_1, \ldots, h_T)]}{n!},
	\end{equation*}
	where $h_j$ is the house most preferred by agents of type $j$.

	Let us bound the running time of the algorithm. First, we show how to
	efficiently compute the set of agent types present in the instance, their
	frequencies and preferences. 
	% \haris{Reviewer 2: I do not believe in the
	% $O(|R|)$ running time: Doesn't one have to check all pairs of agent types, at
	% the very least? }\julian{I explained in more detail how to computer the types
	% in $O(|R|)$ time.} 
	Think of the linear preference relation of some agent as a
	``string'' whose ``letters'' are houses. Notice that two agents of the same
	type give rise to the same string. In $O(|R|)$ time we can generate the set
	of all string and build a trie \citep[Ch.~5, ][]{SeWa11a} out of them, keeping a
	frequency count of how many strings of each kind we have seen, which
	correspond to how many agents of that type there are.
	% \haris{Please check that this is the correct reference to Sedgewick.}

	Next, we need to bound the time it takes to fill the DP table. Notice that
	the total number of entries $M[\vec{s}, \vec{b}]$ can be as large as $\prod_j
	d_j \cdot m^T$ entries, since there are $\prod_j d_j$ choices for $\vec{s}$
	and $m^T$ choices for $\vec{b}$. Unfortunately, this would be too large for
	our purposes. The key observation is that not all possible vectors $\vec{s}$
	are reachable from our recurrence. In particular, notice that every time we
	apply \eqref{eq:recurrenceassignment} we decrease $\sum_j s_j$ by one and
	once all houses are assigned, the vector $\vec{b}$ must be $\textrm{nil}$
	everywhere, which means we are at the base case of the recurrence. Therefore,
	we only need to keep track of vectors $\vec{s}$ where $\sum_j (d_j - s_j)
	\leq m$. There are only ${m + T -1 \choose T -1 }$ such vectors. Thus, the
	total  number of entries we need to keep track of is bounded by ${m+T \choose
	T} \cdot m^T$. Computing $M[\vec{s}, \vec{b}]$ using
	\eqref{eq:recurrenceassignment} takes $O(T \cdot m)$ time, while using other
	cases of the recurrence takes $O(m)$ time provided the function
	$\closure(\cdot)$ is already computed, which takes $O(T\cdot m^{T+1})$ time
	overall. Adding everything up, we get that the total running time is $O( |R|
	+ { m+T \choose m} \cdot T \cdot  m^{T+1})$ as it appears in the theorem statement.
\end{proof}

\begin{corollary}\label{cor:houses}
  There is an FPT algorithm for computing $RSD(R)(i)(h)$ with parameter $m =
  \text{\# of houses}$. The running time is $O(|R| + m^{m^m})$. 
  %\julian{I've simplified the running time of the corollary. Please check that the derivation makes sense.}
\end{corollary}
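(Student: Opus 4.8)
The plan is to derive the corollary directly from Theorem~\ref{th:fptassign} by bounding the number of agent types $T$ in terms of the number of houses $m$ alone, and then simplifying the resulting running time. The key observation is that, in the assignment setting, an agent type is completely specified by a \emph{linear} preference relation over some subset of the $m$ houses; equivalently, a type is a sequence of distinct houses, namely the acceptable houses listed from most to least preferred. First I would count such sequences: there are exactly $m!/(m-k)!$ of length $k$, so the number of distinct types---hence the largest possible value of $T$---is bounded by
\[ T \;\le\; \sum_{k=0}^{m} \frac{m!}{(m-k)!} \;=\; m! \sum_{j=0}^{m} \frac{1}{j!} \;\le\; e\cdot m! \;\le\; m^m, \]
where the last inequality holds for all sufficiently large $m$ (for instance by Stirling's approximation). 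In particular $T \le m^m$.

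The next step is to substitute this bound into the running time $O\!\left(|R| + \binom{m+T}{T}\cdot T\cdot m^{T+1}\right)$ guaranteed by Theorem~\ref{th:fptassign}. I would bound the binomial coefficient by $\binom{m+T}{T} = \binom{m+T}{m} \le (m+T)^m \le (2m^m)^m = 2^m\, m^{m^2}$, the standalone factor by $T \le m^m$, and observe that the dominant factor is $m^{T+1} \le m^{m^m + 1}$. Collecting all three as powers of $m$ (using $2^m = m^{O(m)}$), the total exponent of $m$ is $\left(m^2 + O(m)\right) + m + \left(m^m + 1\right) = m^m + O(m^2)$, so the product equals $m^{\,m^m + O(m^2)}$.

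Finally I would note that this running time already has the defining FPT shape: a term $|R|$ that is at most linear in the input size, plus a term depending only on the parameter $m$. Hence $RSD(R)(i)(h)$ can be computed in FPT time with parameter $m$, and the bound is reported as $O(|R| + m^{m^m})$. The step that most deserves care is the final simplification: strictly speaking the exponent is $m^m + O(m^2) = (1+o(1))\,m^m$ rather than exactly $m^m$, so one must be comfortable absorbing the sub-leading $O(m^2)$ term in the exponent into the single crude tower expression $m^{m^m}$. Two estimates are important to keep this honest: using $\binom{m+T}{m}\le(m+T)^m$ keeps the binomial's contribution to the exponent at $O(m^2)$, and using the leading-constant-one bound $T\le m^m$ (rather than a weaker $T=O(m^m)$ with an implicit constant $C>1$) prevents the dominant factor $m^{T+1}$ from inflating the exponent into a constant multiple $C\,m^m$ of $m^m$.
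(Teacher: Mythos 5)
Your overall strategy is exactly the paper's: bound the number of agent types $T$ purely as a function of $m$ and substitute into the running time of Theorem~\ref{th:fptassign}. Your count of the types is in fact the more careful one --- a type is a linear order on a subset of the houses, giving $\sum_{k=0}^{m} m!/(m-k)! = m!\sum_{j=0}^{m} 1/j! \le e\cdot m!$ --- whereas the paper's stated count $\sum_{k=1}^m k!$ drops the $\binom{m}{k}$ factor (harmlessly, since its algebra only uses $T=\Theta(m!)$).

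The gap is in the final step, and you have flagged it yourself without closing it: $m^{\,m^m+O(m^2)}$ is \emph{not} $O(m^{m^m})$, since the quotient is $m^{\Theta(m^2)}\to\infty$; no amount of ``absorbing into the crude tower'' turns that into a valid $O(\cdot)$ claim. The culprit is relaxing $T\le e\cdot m!$ to $T\le m^m$ \emph{before} substituting into the dominant factor $m^{T+1}$: that factor alone then already saturates the target $m^{m^m}$, leaving no slack for the remaining factors $\binom{m+T}{m}\cdot T = m^{O(m^2)}$. The fix --- which is what the paper does --- is to carry $T\le 3m!$ (the paper uses $T=m!(1+2/m)$) all the way through. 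Then, using $m!\le m^{m-1}$, the total exponent of $m$ is at most $(3m!+1)+(m^2+m)+(m+1)\le 3m^{m-1}+m^2+2m+2\le m^m$ for $m\ge 4$, because $m!=o(m^m)$ with room to spare; small $m$ are absorbed into the constant of the $O(\cdot)$. With that one-line reordering your argument is complete and matches the paper's.
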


\begin{proof} 
	Notice that when we have $m$ houses, there can be at most $\sum_{k=1}^m k!$
	agent types, one for each total ordering of every subset of the $m$ houses.
	The time bound follows from plugging $T = m! (1 + \frac{2}{m}) \geq
	\sum_{k=1}^m k!$ into Theorem~\ref{th:fptassign}:

	\begin{align*}
		{ m+T \choose T } \cdot T \cdot m^{T+1} 
   		& =  { m+T \choose m } \cdot T \cdot m^{T+1}, \\
   		& \leq \frac{(m+T)^m}{m!} \cdot m! \left(1 + \frac{2}{m}\right) \cdot m^{m! \left(1 + \frac{2}{m}\right) + 1}, \\
   		& \leq \left(m+m!\left(1 + \frac{2}{m}\right)\right)^m \cdot \left(1 + \frac{2}{m}\right) \cdot m^{m! \left(1 + \frac{2}{m}\right) + 1}, \\
   		& \leq \left(m!\left(1 + \frac{3}{m}\right)\right)^m \cdot \left(1 + \frac{2}{m}\right) \cdot m^{m! \left(1 + \frac{2}{m}\right) + 1}, \\
   		& \leq m!^m \cdot \left(1 + \frac{3}{m}\right)^{m+1} \cdot m^{m! \left(1 + \frac{2}{m}\right) + 1}, \\
   		& \leq m^{m^2} \cdot O(1) \cdot m^{m! \left(1 + \frac{2}{m}\right) + 1}, \\
   		& = O(m^{m^m} ),
	\end{align*}
	where the last inequality follows from the upper bound $m! \leq e(m^{0.5}) \left(\frac{m}{e}\right)^m$, which  is an easy consequence of Stirling's approximation for factorial numbers.
\end{proof}	

\begin{corollary}\label{cor:assign-agent-types}
  There is an algorithm for computing $RSD(R)(i)(h)$ whose running time is $O(|R| + T \cdot m^{2T+1})$.
\end{corollary}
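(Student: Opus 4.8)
The plan is to derive the bound directly from Theorem~\ref{th:fptassign}, whose running time is $O(|R| + {m+T \choose T} \cdot T \cdot m^{T+1})$, by replacing the binomial coefficient with a clean power of $m$. The target running time $O(|R| + T \cdot m^{2T+1})$ will follow as soon as I can show that ${m+T \choose T} = O(m^T)$. Note that a crude estimate such as ${m+T \choose T} \le (m+1)^T$ or ${m+T \choose T} \le (m+T)^T$ is \emph{not} good enough: it would leave an extra $2^T$-type factor and ultimately inflate the exponent of $m$ beyond $2T+1$. So the one genuine step is to prove a sufficiently tight bound on the binomial coefficient.

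To that end I would use the product representation ${m+T \choose T} = \prod_{i=1}^{T} \frac{m+i}{i}$. The first factor (at $i=1$) equals $m+1$, and I would check that every remaining factor is at most $m$: indeed $\frac{m+i}{i} \le m$ is equivalent to $m \le i(m-1)$, which holds for all $i \ge 2$ whenever $m \ge 2$. Multiplying these estimates gives ${m+T \choose T} \le (m+1)\,m^{T-1}$ for $m \ge 2$. The same bound can be obtained by a one-line induction on $T$, using ${m+T+1 \choose T+1} = {m+T \choose T} \cdot \frac{m+T+1}{T+1}$ and the inequality $\frac{m+T+1}{T+1} \le m$.

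Substituting this into Theorem~\ref{th:fptassign} yields ${m+T \choose T} \cdot T \cdot m^{T+1} \le (m+1)\,m^{T-1} \cdot T \cdot m^{T+1} \le 2T\,m^{2T+1}$, which is $O(T \cdot m^{2T+1})$ as claimed; the degenerate case $m=1$ is trivial, since with a single house there are at most two agent types and the whole expression is bounded. The main ``obstacle'' here is not conceptual but a matter of not discarding the factor of $1/T!$ hidden inside the binomial coefficient: one must exploit the telescoping/product structure rather than bound each of the $T$ factors uniformly. Finally, I would remark that for any fixed number of agent types $T$ the bound $O(|R| + T \cdot m^{2T+1})$ is polynomial in the size of the instance, which is precisely the promised polynomial-time algorithm when the number of agent types is constant.
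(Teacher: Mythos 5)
Your proof is correct and reaches the corollary by essentially the same route as the paper: substitute an upper bound on ${m+T \choose T}$ into the running time of Theorem~\ref{th:fptassign}. The difference lies in the bound itself, and your remark that a crude estimate is ``not good enough'' is in fact a fair critique of the paper's own chain. The paper uses ${m+T \choose T} \le (m+T)^T/T! \le (m+1)^T$ and then asserts $(m+1)^T \cdot T \cdot m^{T+1} = O(T\cdot m^{2T+1})$; since $(m+1)^T = m^T(1+1/m)^T$, that last step silently discards a factor $(1+1/m)^T$, which is $O(1)$ only when $T = O(m)$ (for fixed $m$ and growing $T$ it behaves like $e^{T/m}$), so as a bound uniform in both parameters it is slightly lossy. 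Your product estimate ${m+T \choose T} = \prod_{i=1}^{T}\tfrac{m+i}{i} \le (m+1)\,m^{T-1}$ for $m\ge 2$ keeps exactly one factor of $m+1$ and bounds each remaining factor by $m$, yielding $2T\,m^{2T+1}$ outright and making the stated running time valid without any hidden exponential slack; the $m=1$ case is trivial as you note. Both derivations deliver the conclusion that actually matters, namely that the algorithm is polynomial time whenever the number of agent types $T$ is constant, but yours is the tighter and more airtight version of the same calculation.
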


\begin{proof}
	The time bound follows directly from Theorem~\ref{th:fptassign}:
	\begin{align*}
		{ m+T \choose T } \cdot T \cdot m^{T+1} 
   		& \leq  \frac{(m+T)^T}{T!} \cdot T \cdot m^{T+1} \\
   		& \leq  (m+1)^T \cdot T \cdot m^{T+1} \\
   		& = O(T \cdot m^{2T+1}).
	\end{align*}
\end{proof}

	% \begin{corollary}\label{cor:assign-houses}
	% 	There \rsd for the assignment setting. 
	% \end{corollary}
	% \begin{proof}
	% 	The running time of the algorithm is $O(|\pref| + m^{2T+2} \cdot 2^T)$. If $T$ is constant, then the algorithm runs in time that is polynomial in the input size.
	% \end{proof}

\section{Conclusions}

\begin{table}[h]
	%	\small
	\centering
	\scalebox{0.95}{
	\begin{tabular}{llll}
		\toprule
		Setting & Parameter & Complexity & Reference \\
		\midrule
		Voting & $n$: \# agents & in FPT $O(|R| + m n \cdot 2^n)$
		  & Cor.~\ref{cor:numberofagents} \\
		Voting & $T$: \# agent types & in FPT $O(|R| + m T \cdot 2^{T})$
		  & Cor.~\ref{cor:numberofagenttypes} \\
		Voting & $m$: \# alternatives & in FPT $O(|R|+ m \cdot 3^m \cdot 2^{3^m})$
	    & Cor.~\ref{cor:alternatives} \\
		Voting & $q$: \# alternative types & in FPT $O(|R|+q\cdot3^q\cdot 2^{3^q})$
		  & Cor.~\ref{cor:alternativestypes} \\
		\midrule
		Assignment & $m$: \# houses & in FPT $O(|R| +m^{m^m})$
		  & Cor.~\ref{cor:houses} \\
		Assignment & $T$: \# agent types & $O(|R| + T \cdot m^{2T+1})$
		  & Cor.~\ref{cor:assign-agent-types} \\
		\bottomrule
	\end{tabular}
	}

	\caption{Parametrized time complexity bounds for \rsd}
	\label{table:summary:housing2}
\end{table}

In this paper, we presented the first parametrized complexity analysis of \rsd
both for the voting and assignment setting. For voting, we presented FPT
algorithms for parameters \# agent types and \# alternatives. For the
assignment setting, we presented an FPT algorithm for parameter \# houses.
Although an FPT algorithm for the assignment setting with parameter \# agent
types still eludes us, we showed that the problem is polynomial-time solvable if \#
agent types is  constant.
We leave as an open problem to settle the parametrized complexity of \rsd in
the assignment setting for the parameter $p = \text{\# of agent types}$.

\section*{Acknowledgments}
% \paragraph{Acknowledgment}

The authors thank the anonymous reviewers whose comments helped to improve the presentation of the paper. 
NICTA is funded by the Australian Government through the Department of Communications and the Australian Research Council through the ICT Centre of Excellence Program.

% \bibliographystyle{model1a-num-names}
% \bibliography{../../../pamas/abb,../../../pamas/pamas,../../../pamas/brandt,../../../pamas/aziz}
% ../../../pamas/abbshort,
\end{document}